\newtheorem{invariant}{Invariant}
\def\denseformat{
\setlength{\textheight}{9.5in}
\setlength{\textwidth}{6.9in}
\setlength{\evensidemargin}{-0.3in}
\setlength{\oddsidemargin}{-0.3in}
\setlength{\headsep}{10pt}
\setlength{\topmargin}{-0.44in}
\setlength{\columnsep}{0.375in}
\setlength{\itemsep}{0pt}
}
\newtheorem{theorem}{Theorem}[section]
\newtheorem{definition}[theorem]{Definition}
\newtheorem{lemma}[theorem]{Lemma}
\newtheorem{corollary}[theorem]{Corollary}
\def\boldhead#1:{\par\vskip 7pt\noindent{\bf #1:}\hskip 10pt}
\def\ithead#1:{\par\vskip 7pt\noindent{\it #1:}\hskip 10pt}
\def\inline#1:{\par\vskip 7pt\noindent{\bf #1:}\hskip 10pt}
\def\midinline#1:{\par\noindent{\bf #1:}\hskip 10pt}
\def\dnsinline#1:{\par\vskip -7pt\noindent{\bf #1:}\hskip 10pt}
\def\ddnsinline#1:{\newline{\bf #1:}\hskip 10pt}
\def\largeinline#1:{\par\vskip 7pt\noindent{\large\bf #1:}\hskip 10pt}
\long\def\comment #1\commentend{}
\long\def\commhide #1\commhideend{}
\long\def\commfull #1\commend{#1}
\long\def\commabs #1\commenda{}
\long\def\commtim #1\commendt{#1}
\long\def\commb #1\commbend{}
\long\def\commedit #1\commeditend{} 
\long\def\commB #1\commBend{}       
\long\def\commex #1\commexend{}     
\long\def\commsiena #1\commsienaend{}  
\long\def\commBI #1\commBIend{}  
\long\def\CProof #1\CQED{}
\def\qed{\mbox{}\hfill $\Box$\\}
\def\blackslug{\hbox{\hskip 1pt \vrule width 4pt height 8pt
    depth 1.5pt \hskip 1pt}}
\def\QED{\quad\blackslug\lower 8.5pt\null\par}
\long\def\PPP#1{\noindent{\bf Proof:}{ #1}{\quad\blackslug\lower 8.5pt\null}}
\long\def\denspar #1\densend
\newif\ifnotesw\noteswtrue
\ifnotesw\marginpar[\hfill\(\top\)]{\(\top\)}\fi}%
\ifnotesw\marginpar[\hfill\(\bot\)]{\(\bot\)}\fi}
\newcommand{\mnote}[1]%
    {\ifnotesw\marginpar%
        [{\scriptsize\it\begin{minipage}[t]{\marginparwidth}
        \raggedleft#1%
                        \end{minipage}}]%
        {\scriptsize\it\begin{minipage}[t]{\marginparwidth}
        \raggedright#1%
                        \end{minipage}}%
    \fi}
\def\cG{{\cal G}}
\def\MathF{\hbox{\rm I\kern-2pt F}}
\def\MathP{\hbox{\rm I\kern-2pt P}}
\def\MathR{\hbox{\rm I\kern-2pt R}}
\def\MathZ{\hbox{\sf Z\kern-4pt Z}}
\def\MathN{\hbox{\rm I\kern-2pt I\kern-3.1pt N}}
\def\MathC{\hbox{\rm \kern0.7pt\raise0.8pt\hbox{\footnotesize I}
\kern-4.2pt C}}
\def\MathQ{\hbox{\rm I\kern-6pt Q}}
\newsavebox{\ttop}\newsavebox{\bbot}
\def\eps{\epsilon}
\newcommand{\free}{\texttt{has-free}}
\newcommand{\gfree}{\texttt{get-free}}
\newcommand{\add}{\texttt{match}}
\newcommand{\mate}{\texttt{mate}}
\newcommand{\apath}{\texttt{aug-path}}
\newcommand{\surrogate}{\texttt{surrogate}}
\newcommand {\ignore} [1] {}
\begin{document}

\title{Simple Deterministic Algorithms for Fully Dynamic Maximal Matching}
\author{
Ofer Neiman \thanks{Department of Computer Science,
        Ben-Gurion University of the Negev, POB 653, Beer-Sheva 84105, Israel.
         \newline E-mail: {\tt neimano@cs.bgu.ac.il}.
         This work is supported by ISF grant No. (523/12) and by the European Union's Seventh Framework Programme (FP7/2007-2013) under grant agreement $n^\circ$303809.}
\and
Shay Solomon \thanks{Department of Computer Science and Applied Mathematics, The Weizmann Institute of Science, Rehovot 76100, Israel. E-mail: {\tt shay.solomon@weizmann.ac.il}. This work is supported by the Koshland Center for basic Research.
\newline Part of this work was done while the author was a graduate student at the Ben-Gurion University of the Negev,
under the support of the Clore Fellowship grant No.\ 81265410, the BSF grant No.\ 2008430, and the ISF grant No.\ 87209011.}}

\date{\empty}

\begin{titlepage}
\def\thepage{}
\maketitle

\begin{abstract}
A maximal matching can be maintained in fully dynamic (supporting both addition and deletion of edges)
$n$-vertex graphs using a trivial deterministic algorithm with a worst-case update time of $O(n)$.
No deterministic algorithm that outperforms the na\"{\i}ve $O(n)$ one was reported up to this date. The only progress in this direction is due to Ivkovi\'{c} and Lloyd \cite{IL93},
who in 1993 devised a deterministic algorithm with an \emph{amortized} update time of $O((n+m)^{\sqrt{2}/2})$, where $m$ is the number of edges.

In this paper we show the first deterministic fully dynamic algorithm that outperforms the trivial one.
Specifically, we provide a deterministic \emph{worst-case} update time of $O(\sqrt{m})$.
Moreover, our algorithm maintains a matching which is in fact a $3/2$-approximate maximum cardinality matching (MCM).
We remark that no fully dynamic algorithm for maintaining $(2-\eps)$-approximate MCM improving upon the na\"{\i}ve $O(n)$ was known prior to this work, even allowing amortized time bounds and \emph{randomization}.

For low arboricity graphs (e.g., planar graphs and graphs excluding fixed minors), we devise another simple deterministic algorithm with \emph{sub-logarithmic update time}.
Specifically, it maintains a fully dynamic maximal matching with amortized update time of $O(\log n/\log \log n)$.
This result addresses an open question of Onak and Rubinfeld \cite{OR10}.

We also show a deterministic algorithm with optimal space usage, that for arbitrary graphs maintains a maximal matching in amortized $O(\sqrt{m})$ time, and uses only $O(n+m)$ space.
\end{abstract} 
\end{titlepage}

\pagenumbering {arabic} 

\section{Introduction}

In this paper we study deterministic algorithms for maximal matching in a dynamically changing graph. While graphs have been traditionally studied as static objects, in some of the modern applications of graph theory (e.g. communication and social networks, graphics and AI) graphs are subject to discrete changes. In the last few decades there has been a growing interest in algorithms and data structures for such dynamically changing graphs. In particular, several classical combinatorial problems, such as connectivity, min-cut, minimum spanning tree \cite{EGIN97, HK99, Thor07,HLT01,AL94,KS02,DI04,DI05,AT07}, have been considered in such a dynamic setting.

Our goal is to design a data structure that maintains a maximal matching, or an approximate maximum cardinality matching, in a fully dynamic graph. This dynamic setting allows both insertions and deletions of edges, while the vertex set is fixed. A standard assumption is that in each step a single edge is added to the graph or removed from it; such a step is called an \emph{edge update}
(or shortly, an \emph{update}). Throughout the paper let $n$ denote the number of vertices in the graph, and $m$ the (current) number of edges.

Observe that a simple greedy algorithm computes a maximal matching in $O(m)$ time, so recomputing a maximal matching from scratch would cost $O(m)$ per update. It is not hard to see that one can also dynamically maintain a maximal matching with a worst-case update time of $O(n)$ (see Section \ref{sec:overview}).
To the best of our knowledge there was no deterministic algorithm known that beats this na\"{\i}ve $O(n)$ time, even allowing amortized time bounds. Ivkovi\'{c} and Lloyd \cite{IL93} showed an algorithm with an \emph{amortized} update time of $O((n+m)^{\frac{\sqrt{2}}{2}})$, which improves upon the $O(n)$ time algorithm only when $m = O(n^{\sqrt{2}})$.
\vspace{0.12in}\\
{\bf Our results.}
We show a simple deterministic algorithm that for any dynamic graph maintains explicitly a maximal matching in $O(\sqrt{m})$ \emph{worst-case} update time.\footnote{We ignore additive terms that depend at most logarithmically on $n$.} That is, if the current graph has $m$ edges, then the next add or delete operation will be handled in $O(\sqrt{m})$ time.
This improves upon the amortized bound of \cite{IL93} for all values of $m$.  Moreover, our algorithm is arguably simpler than that of \cite{IL93}.

It is well known that a maximal matching is in particular a $2$-approximation for maximum cardinality matching (MCM). Our algorithm has the additional property that there are no augmenting paths of length $3$ at all times, which implies that the matching we maintain is in fact a $3/2$-approximation for MCM. Remarkably, no algorithm with update time better than $O(n)$ was known for maintaining $(2-\eps)$-approximate MCM (for any $\eps>0$), even allowing amortized time bounds and \emph{randomization}.
Our deterministic data structure also maintains an approximate \emph{vertex cover}, as it is well-known that the set of vertices participating in a maximal matching is in fact a $2$-approximate vertex cover.

We also obtain improved bounds for low arboricity graphs, which are uniformly sparse graphs (see Definition \ref{def:arb}). We show a simple deterministic algorithm that maintains a maximal matching in amortized $O\left(\frac{\log n}{\log((\log n)/c)} + c\right)$ time per update,
provided that the dynamic graph has arboricity at most $c=o(\log n)$ at all times. When $c=\Omega(\log n)$ we obtain amortized $O(c)$ time.
It is well known that the arboricity of any graph with $m$ edges is at most $\sqrt{m}$. So this algorithm for bounded arboricity graphs gives rise to an even simpler $O(\sqrt{m})$ amortized time data structure for dynamic maximal matching in \emph{arbitrary} graphs (but not a $3/2$-approximate MCM).
Moreover, we show that this algorithm can be implemented using \emph{optimal} space $O(n+m)$.

At the ``Open Problems'' section of
\cite{OR10} the following question was raised:
``Is there a \emph{deterministic} data structure that achieves
a constant approximation factor with polylogarithmic
update time?'' Our result gives a $2$-approximation with sub-logarithmic update time for low arboricity graphs.

\subsection{Related Work}

Maintaining the maximum cardinality matching dynamically seems to be a difficult task. The state-of-the-art static algorithm by
Micali and Vazirani from 1980 \cite{MV80} takes $O(\sqrt{n}\cdot m)$ time, which suggests that obtaining a dynamic algorithm with $o(\sqrt{n})$ amortized time would be considered a breakthrough. Sankowski \cite{Sank07} showed a randomized algorithm with $O(n^{1.495})$ time per update that dynamically maintains an MCM. In a certain randomized model, where the edge being added or deleted is chosen at random, Alberts and Henzinger \cite{AH95} showed that MCM can be maintained in amortized $O(n)$ update time.

Recently some very successful randomized algorithms were developed for maintaining an approximate MCM dynamically. Onak and Rubinfeld \cite{OR10} showed a randomized $O(1)$-approximate MCM whose amortized time per update is $O(\log^2n)$ with high probability. This was improved by Baswana, Gupta and Sen \cite{BGS11} to a randomized algorithm that maintains a maximal matching (and in particular $2$-approximate MCM) in $O(\log n)$ expected amortized update time.

Concurrently and independently of our work, Anand \cite{anand} obtained similar results for deterministic fully dynamic approximate MCM. Specifically, Anand shows a $3/2$-approximate MCM in amortized $O(\sqrt{m})$ update time.
(Recall that our bound $O(\sqrt{m})$ is worst-case rather than amortized.)

\subsection{Overview of the Algorithm}\label{sec:overview}

For simplicity of  presentation, in this extended abstract we prove a somewhat weaker bound of $O(\sqrt{m+n})$ on the worst-case update time.
The proof of the bound $O(\sqrt{m})$ follows similar lines, and is deferred to the full version.

Let us recall how the na\"{\i}ve $O(n)$ time per update algorithm works. For every update, if an edge is added to the graph, check if it can be added to the matching. If a matched edge $\{u,v\}$ is deleted from the graph, examine all the neighbors of $u$ and $v$ to see if some edge $\{u,w\}$ or $\{v,w'\}$ (or both) can be added to the matching. It is not hard to verify that the resulting matching remains maximal. Now the question is, can we do anything better than scanning all the neighbors of a free vertex in order to find a new match for it? We believe that, in general, the answer is no.

The way our algorithm overcomes this obstacle is by ensuring that high degree vertices are \emph{never} free. In particular, we maintain the following invariant: vertices of degree larger than $\sqrt{2(m+n)}$ are matched at all times. Then scanning all neighbors of a free vertex is not so expensive. Next we briefly explain how to maintain this invariant. When a high degree vertex $u$ becomes free and cannot be matched
(because all its neighbors are matched), we find a \emph{surrogate} for it, that is, a vertex $v'$ that is matched to a neighbor $v$ of $u$, such that the degree of $v'$ is at most $\sqrt{2m}$. Then we can match $u$ to $v$, and the low degree vertex $v'$ becomes free instead of $u$. We prove that such a vertex $v'$ must exist, and show how to find one in $O(\sqrt{m+n})$ time.

One has to be careful when defining the invariant with respect to the number of edges, as this number changes with time. It is even possible that at some point many vertices violate the invariant simultaneously. The first attempt is to find a low degree surrogate for each of these vertices.
Finding a surrogate, however, takes $O(\sqrt{m+n})$ time, and so we cannot handle many vertices at once.
Instead, at each edge update we handle $O(1)$ ``problematic'' vertices, those that are getting close to violating the invariant. By handling the problematic vertices in decreasing order of degree (one at each edge update), we demonstrate that each problematic vertex will be handled
long before it can violate the invariant.

In order to obtain a $3/2$-approximate MCM, this approach does not suffice. When a vertex $u$ becomes free but has no free neighbors, we may be forced to search every one of its $\sqrt{2m}$ neighbors $v$, who are matched to $v'$, for a free neighbor of $v'$. To this end we use another idea: instead of searching for a free neighbor, we maintain a certain data structure for every vertex that holds information about all its free neighbors. This data structure will enable us to
determine the existence of a free neighbor in $O(1)$ time, update a single neighbor (a free neighbor that becomes matched, and vice versa) in $O(1)$ time, and find a free neighbor in $O(\sqrt{n})$ time. Observe that whenever a vertex changes its status (free or matched) it must inform all of its neighbors in order to update their free neighbors data structure. However, since we guarantee that high degree vertices never change their status (they are always matched),
 updating this data structure will only cost $O(\sqrt{m+n})$ time per update.
\vspace{0.12in}\\
{\bf Low arboricity graphs.}
For graphs with bounded arboricity we use a result of Brodal and Fagerberg \cite{BF99}, who devised a fully dynamic algorithm for maintaining a bounded \emph{edge orientation}. That is, assign a direction for each edge in the graph such that the out-degree of every vertex is bounded. Although the na\"{\i}ve $O(n)$ time algorithm mentioned above is in fact very efficient for bounded degree graphs (it runs in $O(d)$ time if $d$ is the maximum degree), in bounded arboricity graphs the in-degree can be arbitrarily high. Our algorithm does the following: vertices have only \emph{partial information} about their free neighbors. In particular, each vertex will send information about its status only along the out-going edges of the orientation. This greedy approach guarantees that each vertex will hold authentic information about all its (possibly many) in-coming neighbors
at all times. Information about the few out-going neighbors will not be authentic, but can be verified on demand by scanning all of them.

\section{Preliminaries}

Let $G = (V,E)$ be an arbitrary graph.
Any set $M \subseteq E$ of vertex-disjoint edges is called a \emph{matching}.
A matching of maximum cardinality in $G$ is called a \emph{maximum (cardinality) matching} (or shortly, \emph{MCM}),
and a matching that is maximal under inclusion is called a \emph{maximal matching}.
For any parameter $t \ge 1$, a matching that contains at least $1/t$ fraction of the edges in an MCM is called a \emph{$t$-approximate MCM}.
It is easy to see that any maximal matching is a 2-approximate MCM.

A vertex is called \emph{matched} if it is incident on some edge of $M$.
Otherwise it is \emph{free}. For any edge $\{u,v\} \in M$, we say that $u$ (respectively, $v$) is the \emph{mate}
of $v$ (resp., $u$). An \emph{alternating path} is a path whose edges alternate between $M$ and $E \setminus M$.
An \emph{augmenting path} is an alternating path that starts and ends at different free vertices.
It is well-known \cite{HK73} that any matching without augmenting paths of length at most $2k - 3$
is a $(k/(k-1))$-approximate MCM. In particular, if there are no augmenting paths of length at most 3,
we get $3/2$-approximate MCM.

\section{General Graphs}

In this section we present a fully dynamic algorithm for maintaining a maximal matching that is also a $3/2$-approximate MCM.
Our data structure is deterministic
and requires a worst-case update time of $O(\sqrt{n+m})$, for general $n$-vertex graphs with $m$ edges.
Denote the (static) vertex set of the graph by $V=\{1,2\dots,n\}$, and assume for simplicity that $\sqrt{n}$ is an integer. Let $\cG=(G_0,G_1,\ldots) $ be the given sequence of graphs, we assume that the initial graph $G_0$ is empty, and each graph $G_i$ is obtained from the previous graph $G_{i-1}$ by either adding or deleting a single edge.
For each time step $i$, write $G_i = (V,E_i)$, $m_i = |E_i|$.
We maintain the number of edges in the current graph $G = (V,E)$ in the variable $m$.

\subsection{Data Structures}

The algorithm will maintain the following data structures.
\begin{itemize}
\item
The current matching $M$ is stored in an AVL tree. It supports \texttt{insert} and \texttt{delete} in $O(\log n)$ time. Every vertex $v\in V$ holds a $\mate(v)$ that returns its current mate in the matching (or $\bot$ if $v$ is free).

\item
For each vertex $v\in V$ an AVL tree $N(v)$ that stores its current neighbors, and a variable $\deg(v)$ for its degree. It supports \texttt{insert} and \texttt{delete} in $O(\log n)$ time, and extracting arbitrary $r$ neighbors in $O(r)$ time (by traversing the tree).

\item
For each vertex $v\in V$ a data structure $F(v)$ that holds its free neighbors, and supports the following operations: \texttt{insert} and \texttt{delete} in $O(1)$ time, $\free(v)$ that returns TRUE if $v$ has a free neighbor in $O(1)$ time, and $\gfree(v)$ that returns an arbitrary free neighbor of $v$ in $O(\sqrt{n})$ time.

In order to implement $F(v)$ for each vertex $v\in V$, we use a boolean array of size $n$ indicating the current free neighbors, a counter array of size $\sqrt{n}$ that has in position $j$ the number of free neighbors in the range $[\sqrt{n}\cdot j+1,\sqrt{n}(j+1)]$, and a variable for the total number of free neighbors. Now \texttt{insert}, \texttt{delete} and $\free(v)$ are clearly $O(1)$ operations, and in order to implement $\gfree(v)$, we can scan in $\sqrt{n}$ time the counter array for a positive entry, and check the appropriate range in the boolean array.

\item
A maximum heap $F_{max}$ of all free vertices indexed by their degree. It supports \texttt{insert}, \texttt{delete}, \texttt{update-key}, and \texttt{find-max} in $O(\log n)$ time.
\end{itemize}


\begin{figure}[!ht]
\fbox{
\begin{minipage}[t]{165mm}
\texttt{handle-addition}($\{u,v\}$):
\begin{enumerate}
\item Update $N(u)$, $N(v)$, $\deg(u)$, $\deg(v)$ and $F_{max}$;
\item If both $u,v$ are free, $\add(u,v)$;
\item If only $u$ is free: \label{item:ee}
    \begin{enumerate}
    \item Set $v'=\mate(v)$;
    \item Remove $u$ from $F(v')$;
    \item If $\free(v')$:
        \begin{enumerate}
        \item Call $\add(u,v)$; $\add(v',\gfree(v'))$;
        \item Remove $\{v,v'\}$ from $M$;
        \end{enumerate}
    \item Else, for all $w\in N(u)$, add $u$ to $F(w)$;
    \end{enumerate}
\item If only $v$ is free, return to \ref{item:ee}  with roles of $u,v$ switched;
\end{enumerate}
\end{minipage}}\caption{Handle an edge addition.}
\label{fig:handle-add}
\end{figure}


\begin{figure}[!ht]
\fbox{
\begin{minipage}[t]{165mm}
$\add(u,v)$:
\begin{enumerate}
\item Add $\{u,v\}$ to $M$;
\item Remove $u,v$ from $F_{max}$;
\item For $w\in\{u,v\}$:
\begin{enumerate}
\item If $\mate(w)=\bot$, remove $w$ from $F(x)$ \\for all $x\in N(w)$;
\end{enumerate}
\item Set $\mate(u)=v$; $\mate(v)=u$;
\end{enumerate}
\end{minipage}}\caption{Add an edge to the matching.}
\label{fig:add}
\end{figure}


\begin{figure}[!ht]
\fbox{
\begin{minipage}[t]{165mm}
\texttt{\apath}($u$):
\begin{enumerate}
\item For all $w\in N(u)$:
    \begin{enumerate}
    \item Let $w'=\mate(w)$;
    \item If $\free(w')$:
        \begin{enumerate}
        \item Let $x=\gfree(w')$;
        \item break;
        \end{enumerate}
    \end{enumerate}
\item If a free $x$ was found:
    \begin{enumerate}
    \item Call $\add(u,w)$; $\add(w',x)$;
    \item Remove $\{w,w'\}$ from $M$;
    \end{enumerate}
\item Else, if no augmenting path was found:
    \begin{enumerate}
    \item For all $w\in N(u)$ add $u$ to $F(w)$;
    \item Add $u$ to $F_{max}$; Set $\mate(u)=\bot$;
    \end{enumerate}
\end{enumerate}
\end{minipage}}\caption{Finding a length $3$ augmenting path starting at $u$ and adding it to the matching. It is assumed that $\deg(u)\le\sqrt{2n+2m}$ and that $u$ has no free neighbor.}
\label{fig:augmenting}
\end{figure}


\begin{figure}[!ht]
\fbox{
\begin{minipage}[t]{165mm}
\texttt{\surrogate}($u$):
\begin{enumerate}
\item For all $w\in N(u)$:
    \begin{enumerate}
    \item Let $w'=\mate(w)$;
    \item If $\deg(w')\le \sqrt{2m}$, break;
    \end{enumerate}
\item Remove $\{w,w'\}$ from $M$;
\item Call $\add(u,w)$;
\item Return $w'$;
\end{enumerate}
\end{minipage}}\caption{Finding a surrogate low degree vertex for the vertex $u$. It is assumed that $\deg(u)>\sqrt{2m}$ and that $u$ has no free neighbor.}
\label{fig:surrogate}
\end{figure}


\begin{figure}[!ht]
\fbox{
\begin{minipage}[t]{165mm}
\texttt{handle-deletion}($\{u,v\}$):
\begin{enumerate}
\item Update $N(u)$, $N(v)$, $\deg(u)$, $\deg(v)$ and $F_{max}$;
\item If $\{u,v\}\notin M$:
    \begin{enumerate}
    \item If $u$ is free, remove it from $F(v)$; If $v$ is free, remove it from $F(u)$;
    \end{enumerate}
\item Else, if $\{u,v\}\in M$:
    \begin{enumerate}
    \item Remove $\{u,v\}$ from $M$;
    \item For $z\in\{u,v\}$:
        \begin{enumerate}
        \item If $\free(z)$, call $\add(\gfree(z),z)$; \label{item:rr}
        \item Else, if there is no free neighbor for $z$:
            \begin{enumerate}
            \item If $\deg(z)>\sqrt{2m}$, let $z=\surrogate(z)$; Return to \ref{item:rr}.
            \item Else, call $\apath(z)$;
            \end{enumerate}
        \end{enumerate}
    \end{enumerate}
\end{enumerate}
\end{minipage}}\caption{Handle an edge deletion.}
\label{fig:handle-delete}
\end{figure}

\subsection{Algorithm}


At the outset the graph is empty, and we perform an initialization phase for our data structures.
Next, the algorithm is carried out in rounds:
In each round $i = 1,2,\ldots$, a single edge $e_i$ is either added to the graph or deleted
from it, and the algorithm will update the data structure in $O(\sqrt{n + m})$ time to preserve the following invariants at the end of each step $i$.
\begin{invariant}\label{inv:deg}
All free vertices have degree at most $\sqrt{2n+2m}$.
\end{invariant}
\begin{invariant}\label{inv:new-free}
All vertices that became free in round $i$ have degree at most $\sqrt{2m}$.
\end{invariant}
\begin{invariant}\label{inv:match}
The matching $M$ maintained by the algorithm is maximal. Moreover, there are no augmenting paths of length 3 (with respect to $M$).
\end{invariant}

The invariants clearly hold before the first round starts and the edge $e_1$ is handled.
Fix a time step $i$. We will now describe a single round of the algorithm, which handles an edge $e_i$ that is added to the graph or deleted from it.

\subsubsection{Edge Addition}

We start with the case where the edge $e_i = \{u,v\}$ is \emph{added} to the graph, see Figure \ref{fig:handle-add}.
First update the relevant data structures: $N(u),N(v),\deg(u),\deg(v)$ and the keys of $u,v$ in $F_{max}$ (if needed), which takes $O(\log n)$ time.
Next, we distinguish between four cases.
\\\emph{Case 1: Both $u$ and $v$ are matched.} In this case there is nothing to do.
\\\emph{Case 2: Both $u$ and $v$ are free.} In this case we match $u$ with $v$, see Figure \ref{fig:add}. This involves updating the data structures $M$, $F_{max}$ and removing $u,v$ from the free neighbor data structures $F(x)$ for all neighbors $x$ of $u,v$. By Invariant \ref{inv:deg} both $\deg(u)$ and $\deg(v)$ are at most $\sqrt{2n+2m}+1$, so this takes $O(\sqrt{n+m})$ time.
Observe that adding the edge $\{u,v\}$ to $M$ does not create any new augmenting paths
of length at most 3, because by maximality of $M$, both $u,v$ could not have had a free neighbor, and so Invariant \ref{inv:match} is preserved.
\\\emph{Case 3: $u$ is free and $v$ is matched.} In this case, adding the edge $\{u,v\}$ to the graph may give rise to
new augmenting paths of length 3 that include  $\{u,v\}$. Specifically, such a path may exist iff $v'=\mate(v)$ has a free neighbor $w \ne u$. We determine if $v'$ has a free neighbor $w \ne u$ or not in the following way:
First, we remove $u$ from $F(v')$ (we will ``undo'' this before the round is over).
Next, we check if $v'=\mate(v)$ has a free neighbor $w$ (note that $w\neq u$). If we can find one, then we add $\{u,v\}$ and $\{v',w\}$ to $M$, and remove $\{v,v'\}$ from $M$. Observe that when adding the edges we update $F(x)$ \emph{only} for the vertices $x$ that are neighbors of $u$ and $w$ ($v$ and $v'$ were already matched), by Invariant \ref{inv:deg} both $\deg(u)$ and $\deg(w)$ are at most $\sqrt{2n+2m}+1$, so this takes $O(\sqrt{n+m})$ time. If we cannot find such a free neighbor $w$, then $u$ will remain free, and is added to the free neighbor data structures of all its neighbors (in particular to $F(v)$ and if needed to $F(v')$ as well).
\\\emph{Case 4: $u$ is matched and $v$ is free.} This case is symmetric to case 3.

It is easy to see that handling the addition of $\{u,v\}$ in each of the four cases above: (1)
requires an overall update time of $O(\sqrt{n + m})$,  and (2) preserves both Invariants \ref{inv:new-free} and \ref{inv:match}. Invariant \ref{inv:deg} will be handled separately.

\subsubsection{Edge Deletion}

We proceed to the case where the edge $e_i = \{u,v\}$ is \emph{deleted} from the graph, see Figure \ref{fig:handle-delete}.
First update the relevant data structures: $N(u),N(v),\deg(u),\deg(v)$ and $F_{max}$ (if needed), which takes $O(\log n)$ time.
There are two cases to consider.
\\\emph{Case 1: $\{u,v\} \notin M$.} In this case, the only remaining thing to do is to remove $u$ from $F(v)$ if $u$ is free, and remove $v$ from $F(u)$ if $v$ is free.
\\\emph{Case 2: $\{u,v\} \in M$.} Here we delete the edge $\{u,v\}$ from $M$ (we do not update the status of $u$ and $v$ from matched to free yet for technical reasons that will become clear soon, but we will make sure to update $u$ and $v$ with their correct status before the end of this round).
Deleting $\{u,v\}$ from $M$ may give rise to new augmenting
paths of length at most 3 that start at one of the endpoints of this edge.
Next, we show how to handle $u$. The other endpoint $v$ should be handled in the same way.

If $u$ has a free neighbor $w$, we add $\{u,w\}$ to $M$ by calling $\add(u,w)$.
Observe that we \emph{do not} update $F(x)$ for the neighbors $x\in N(u)$ (as $\mate(u)=v$ before $\add$ is called). Since $w$ was free, Invariant \ref{inv:deg} suggests that $\deg(w)\le\sqrt{2n+2(m+1)}$, so this will take only $O(\sqrt{n+m})$ time.
We henceforth assume that $u$ has no free neighbor, and consider two cases.
\\\emph{Case 2.a: $\deg(u) \le \sqrt{2m}$.} In this case we can allow $u$ to become free, but still must search for an augmenting path of length 3 starting at $u$ by calling $\apath(u)$ (see Figure \ref{fig:augmenting}). An augmenting path exists iff some neighbor $w$ of $u$ is matched to $w'$ and $w'$ has a free neighbor $x\neq u$. For each such neighbor $w\in N(u)$ we can in $O(1)$ time detect if its mate $w'$ (recall that $w$ must be matched) has a free neighbor. Only if we find such a $w'$ we do the $O(\sqrt{n})$ operation of actually extracting this free neighbor $x$, and then stop the search (we are guaranteed that $x\neq u$, because we have not changed the status of $u$ to free just yet). If such an $x$ was found, we change $M$ by adding $\{u,w\}$ and $\{w',x\}$ instead of $\{w,w'\}$. Observe that we update $F(y)$ only for neighbors of $x$ (as $u,w,w'$ are recorded as matched), which takes $O(\sqrt{n+m})$ time by Invariant \ref{inv:deg}. If no augmenting path was found, we declare $u$ as a new free vertex (which complies with Invariant \ref{inv:new-free}), and update $F(w)$ for all its neighbors $w$.
A delicate matter that needs attention is the following: if $u$ is the first among $\{u,v\}$ that is handled, $v$ is still recorded as matched, so we will not be able to find an augmenting path of length 3 that starts at $u$ and ends in $v$. However, this path can be detected once we are done with $u$, set its status to free, and handle $v$.
\\\emph{Case 2.b: $\deg(u) > \sqrt{2m}$.}
Note that $u$ cannot become free because its degree is too high, alas it has no free neighbor.
In order to keep $u$ matched, we run $\surrogate(u)$ to find a surrogate $s_u$ for $u$ that
may become free instead of $u$ (see Figure \ref{fig:surrogate}). Even though $\deg(u)$ is high, we claim that after inspecting $\sqrt{2m}$ of the neighbors $w\in N(u)$, we must have found one with degree at most $\sqrt{2m}$, and then stop the scan. Indeed, otherwise the sum of degrees in the graph would be more than $\sqrt{2m}\cdot\sqrt{2m}=2m$ (note that $\mate(w)$ are distinct for different $w$), which is impossible. Since the surrogate $s_u$ has degree at most $\sqrt{2m}$, changing its status to free (if needed) would not violate Invariant \ref{inv:new-free}. Next, handle $s_u$ as $u$ is handled above just before \emph{Case 2.a} (that is, find a free neighbor of $s_u$ or an augmenting path of length 3). Note that handling $s_u$ cannot bring us to \emph{case 2.b}, so there is no risk of an infinite loop. We claim that no augmenting path of length 3 can remain, because any augmenting path emanating from $s_u$ is detected, and the edge $\{u,w\}$ that is added to $M$ in $\surrogate(u)$ cannot be a part of an augmenting path because $u$ has no free neighbors\footnote{Again we exclude augmenting paths ending at $v$, those will be detected later.}.

It is easy to see that handling the deletion of $\{u,v\}$ in each of the two cases above: (1)
requires an overall update time of $O(\sqrt{n + m})$, and (2) preserves both Invariants \ref{inv:new-free} and \ref{inv:match}.

\subsubsection{Bounding the Degree of Free Vertices}

Here we show how to preserve Invariant \ref{inv:deg}, that free vertices have bounded degrees, which is a key property in our algorithm.
The general idea is to identify some \emph{problematic} vertices at the end of each round, and to
\emph{correct} them.
We say that a vertex is \emph{problematic} if (i) it is free, and (ii) its degree exceeds $\sqrt{2m}$.
Such a problematic vertex $x$ is corrected by applying \emph{case 2.b} above on $x$. That is, find a surrogate $s_x$ that may become free instead of $x$, with $\deg(s_x)\le\sqrt{2m}$. In order to preserve Invariant \ref{inv:match}, we then find an augmenting path of length at most $3$ emanating from $s_x$ if one exists.

Since each correction takes $O(\sqrt{n+m})$ time, we can only afford to correct $O(1)$ problematic vertices at the end of each round.
It turns out that correcting the following three vertices (if they are problematic) suffices: first the two endpoints $u$ and $v$ of the handled edge $e_i$,
and afterwards a free vertex $x$ with maximal degree (such a vertex can be extracted from the heap $F_{max}$ in $O(\log n)$ time).

The next lemma implies that Invariant \ref{inv:deg} is preserved.

\begin{lemma}
At the end of each round $i$, for any \emph{free} vertex $x$, $\deg(x) \le \sqrt{2n + 2m}$.
\end{lemma}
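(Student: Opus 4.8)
The strategy is a potential/amortization-free argument based on tracking, for a would-be-violating free vertex $x$, how long ago it last became free and how much the quantity $\sqrt{2m}$ could have grown since then. The key numerical fact to exploit is that the threshold in Invariant~\ref{inv:new-free} is $\sqrt{2m}$ while the threshold in the lemma is $\sqrt{2n+2m}$, so there is slack of roughly $\sqrt{2n}$ between "a vertex is allowed to become free" and "a vertex violates the lemma." First I would fix a round $i$ and a free vertex $x$ at the end of round $i$, and let $i_0 \le i$ be the most recent round in which $x$ became free (if $x$ has been free since the start, $\deg(x)=0$ trivially, so assume $i_0$ exists). By Invariant~\ref{inv:new-free} applied at round $i_0$, we have $\deg_{i_0}(x) \le \sqrt{2m_{i_0}}$. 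Since $x$ has remained free throughout rounds $i_0,\dots,i$, its degree can only have grown by edge insertions incident to $x$; but each such insertion, if it leaves $x$ free, is a Case~3/Case~4 addition, and — crucially — each round changes $m$ by exactly $1$. So $\deg_i(x) \le \deg_{i_0}(x) + (\text{number of insertions at }x\text{ in rounds } i_0{+}1,\dots,i) \le \sqrt{2m_{i_0}} + (i - i_0)$.

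Next I would argue that $i - i_0$ cannot be large relative to $\sqrt{m}$, using the correction mechanism. The point is that once $\deg(x)$ exceeds $\sqrt{2m}$, $x$ becomes \emph{problematic}, and the algorithm corrects a maximum-degree problematic vertex every round. So if $x$ is still free and problematic at round $i$, then in every round strictly between the round where $x$ first became problematic and round $i$, the vertex corrected was one of higher (or equal) degree than $x$ — otherwise $x$ itself would have been corrected and would no longer be free. One must bound how many vertices can simultaneously have degree $\ge \deg(x) > \sqrt{2m}$: by the degree-sum argument already used in Case~2.b (sum of degrees is $2m$), at most $\sqrt{2m}$ vertices can have degree exceeding $\sqrt{2m}$ at any time, so the backlog of problematic vertices ahead of $x$ in the queue is at most $O(\sqrt{m})$. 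Combining, $x$ waits at most $O(\sqrt{m})$ rounds between becoming problematic and being corrected, and in those rounds $m$ changes by at most $O(\sqrt{m})$, so $\sqrt{2m}$ itself grows by only $O(1)$ per round... this needs to be made precise: the cleanest form is to show $i - i_0 \le \sqrt{2n}$-ish, so that $\deg_i(x) \le \sqrt{2m_{i_0}} + \sqrt{2n} \le \sqrt{2m_i + 2n}$ after absorbing the $O(\sqrt{m})$ drift of $m$ into the $n$-slack (note $m_{i_0} \le m_i + (i-i_0)$, and $\sqrt{2(m_i + i - i_0)} + (i - i_0)$ must be shown $\le \sqrt{2m_i + 2n}$, which holds once $i - i_0 = O(\sqrt{m+n})$ and the constants in the thresholds are chosen to give room — this is exactly why the paper writes $\sqrt{2m}$ for new-free vertices but $\sqrt{2(n+m)}$ here).

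I would organize the write-up as: (1) reduce to the case $x$ became free at some round $i_0$ and has degree $> \sqrt{2m_{i_0}}$ at some point after, i.e. became problematic at some round $i_1 \in [i_0, i]$; (2) an observation that at most $\sqrt{2m_j}$ vertices are problematic in any round $j$ (degree-sum bound); (3) the scheduling claim that, since a maximum-degree problematic vertex is corrected each round and corrections only decrease the count of vertices with degree $\ge \deg(x)$, $x$ is corrected within $O(\sqrt{m})$ rounds of round $i_1$, contradicting that $x$ is still free at round $i$ unless $i - i_1$ is small; (4) plug the bound on $i - i_0$ into $\deg_i(x) \le \sqrt{2m_{i_0}} + (i-i_0)$ and do the arithmetic to land below $\sqrt{2n+2m_i}$.

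\medskip
\noindent\textbf{Main obstacle.} The delicate point is step (3): when a problematic vertex gets corrected, its surrogate $s_x$ may itself be a \emph{new} problematic vertex (if insertions later raise $\deg(s_x)$), so "the count of high-degree problematic vertices" is not obviously monotone decreasing round-to-round — corrections can both remove and (later) create problematic vertices, and new edge insertions at the very vertex $x$ can push $x$ up in the ordering. The careful bookkeeping is to track not "number of problematic vertices" but "number of free vertices of degree $\ge d$" for the specific threshold $d = \deg_{i_1}(x)$, note each round increases this count by at most $O(1)$ (only $u,v$ of the handled edge, plus whatever a surrogate does — but a freshly-freed surrogate has degree $\le \sqrt{2m} < d$ so does not count), and each correction of a top-degree vertex decreases it by $1$ whenever it is positive; a standard queue-backlog argument then caps $x$'s wait. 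Getting the constants to close — so that $\sqrt{2m_{i_0}} + (i - i_0) \le \sqrt{2m_i + 2n}$ — is where the $n$ term in the invariant is spent, and it should be checked that handling $u$, $v$ and one heap-max vertex per round (three corrections) gives enough throughput; with the $O(\sqrt{m})$ bound on simultaneously-problematic vertices, three-per-round suffices to keep the wait $O(\sqrt{m}) = O(\sqrt{m+n})$, which is within the available slack.
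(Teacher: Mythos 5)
There is a genuine arithmetic gap in your plan. You repeatedly lean on the idea that there is ``slack of roughly $\sqrt{2n}$'' between $\sqrt{2m}$ and $\sqrt{2n+2m}$, and at the end you assert the ``available slack'' is $O(\sqrt{m+n})$. Neither is true: $\sqrt{2n+2m}-\sqrt{2m} = 2n/\bigl(\sqrt{2n+2m}+\sqrt{2m}\bigr) = \Theta\bigl(n/\sqrt{n+m}\bigr)$, which is $O(1)$ when $m = \Theta(n^2)$. Consequently, the inequality you want to close, $\sqrt{2m_{i_0}} + (i-i_0) \le \sqrt{2n+2m_i}$ with $i-i_0 = O(\sqrt{m})$, is simply false for dense graphs: the left side exceeds the right by $\Theta(\sqrt{m})$. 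So the decomposition ``degree at $i_0$ plus number of insertions since'' cannot work as the main estimate.

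The observation you relegate to the ``Main obstacle'' aside is in fact the heart of the matter and must be promoted to the first step: once a free vertex $x$ is problematic, its degree is \emph{literally frozen} — any edge update touching $x$ corrects it (the algorithm corrects both endpoints of the handled edge), so $\deg(x)$ stays constant, not merely ``grows by at most one per round.'' Using this, if $x$ became problematic at some round $k+1$ and is still free and problematic at round $t$, then $\deg_t(x) = \deg_{k+1}(x) \le \sqrt{2m_k}$ (the step $k \to k+1$ must have been a deletion elsewhere). A violation $\deg_t(x) > \sqrt{2n+2m_t}$ then forces $m_k - m_t > n$, i.e. \emph{more than $n$ edge deletions} between the round $q$ where $m$ last dropped below $m_k$ and round $t$; this is a statement about the net change in $m$, not about the degree slack, and it yields $t-q > n$ rounds for free. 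Your scheduling claim (every vertex that becomes problematic after $q$ has degree strictly less than the frozen $\deg(x)$, via Invariant~\ref{inv:new-free} and the both-endpoints correction) is essentially the one the paper uses, and then the trivial bound of $n$ free vertices — no need for your finer $\sqrt{2m}$ count, though it also works — finishes the contradiction: $x$ would be extracted from $F_{max}$ and corrected within $n$ of those $>n$ rounds. Also, be careful that only the one vertex pulled from $F_{max}$ each round is ``throughput'' for the queue; the corrections of $u$ and $v$ exist to handle the case that an endpoint's own degree changed, not to speed up the backlog, so ``three per round'' is not the right accounting.
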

\begin{proof}
Recall that $G_i = (V,E_i)$ denotes the $i$-th graph in the graph sequence $\cG$,
and $m_i = |E_i|$ stands for the number of edges in it.
First observe that the degree of a problematic vertex $x$ cannot change as long as it is problematic. This is because any change to the degree would mean that we added or deleted an edge touching $x$, and so must have corrected it.

Seeking contradiction, assume that at round $t$ the vertex $x$ is free and $\deg(x)>\sqrt{2n + 2m_t}$. Let $k<t$ be such that at round $k$, $\deg(x)\le\sqrt{2m_k}$ and in all rounds $k<j\le t$ we have that $\deg(x)>\sqrt{2m_j}$. Since $x$ is problematic in all the rounds from $k+1$ to $t$, its degree does not change, and it follows that $\sqrt{2m_k}>\sqrt{2n+2m_t}$, or $m_k-m_t>n$.
Let $k\le q<t$ be the minimal round such that the number of edges in every round $q+1,\dots,t$ is less than $m_k$, observe that $n<t-q$ because there must have been more that $n$ deletions of edges.

We claim that $x$ must be corrected in one of these $n$ rounds. To prove this, it suffices to show that every vertex that becomes problematic after round $q$ will have smaller degree than $\deg(x)$. Once it becomes problematic its degree cannot change, so in $F_{max}$ the vertex $x$ will be handled before all the ``new'' problematic vertices. Since we handle one vertex from $F_{max}$ at each round, and there are more than $n$ rounds from $q$ to $t$, it must be that $x$ is handled in one of them. Suppose vertex $w$ becomes problematic at the conclusion of round $q<r\le t$. There could be three reasons for this: First, if the degree of $w$ changed in round $r$, then actually it must have been corrected (recall that we correct both endpoints of the new edge). Second, if $w$ became a new free vertex, then by Invariant \ref{inv:new-free} $\deg(w)\le\sqrt{2m_r}$, and as $x$ is problematic in round $r$, $\deg(x)>\sqrt{2m_r}$. The last case is that $w$ was already free, and the number of edges decreased. But then in round $r-1$, $x$ is problematic and $w$ is not, thus $\deg(w)\le\sqrt{2m_{r-1}}<\deg(x)$. We conclude that $x$ is indeed corrected before round $t$ ends, a contradiction.
\qed
\end{proof}

We have shown the following.
\begin{theorem}
Starting with the empty graph on $n$ vertices, a maximal matching in the graph which is also a $3/2$-approximate MCM can be maintained in time $O(\sqrt{n+m})$ per edge update, where $m$ is the (current) number of edges.
\end{theorem}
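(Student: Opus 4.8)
The plan is to collect the pieces already established in the section. The algorithm maintains three invariants at the end of every round: Invariant~\ref{inv:deg} (free vertices have degree at most $\sqrt{2n+2m}$), Invariant~\ref{inv:new-free} (vertices that became free in the current round have degree at most $\sqrt{2m}$), and Invariant~\ref{inv:match} ($M$ is maximal with no augmenting path of length~$3$). The correctness part of the theorem follows immediately: Invariant~\ref{inv:match} says $M$ is always maximal, hence a $2$-approximate MCM, and by the fact quoted in the Preliminaries (no augmenting path of length $\le 3$ implies a $3/2$-approximate MCM) we get the claimed approximation. So the theorem reduces to the running-time bound, namely that each of \texttt{handle-addition} and \texttt{handle-deletion}, together with the corrections that restore Invariant~\ref{inv:deg}, runs in $O(\sqrt{n+m})$ worst-case time.

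First I would go through \texttt{handle-addition} case by case, as in the text: updating $N(u),N(v),\deg,F_{max}$ costs $O(\log n)$; the only potentially expensive steps are the calls to $\add$ and $\gfree$. A call $\add(a,b)$ touches $F(x)$ for $x\in N(a)\cup N(b)$ only when $a$ (resp.\ $b$) was previously free, and a free vertex has degree $O(\sqrt{n+m})$ by Invariant~\ref{inv:deg}; $\gfree$ costs $O(\sqrt n)$ by construction of $F(\cdot)$. Every branch of \texttt{handle-addition} performs $O(1)$ such operations, so the total is $O(\sqrt{n+m})$. Next I would do the same for \texttt{handle-deletion}: Case~1 is $O(\log n)$; in Case~2 we may call $\gfree$, $\add$, $\apath$, or $\surrogate$. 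The key quantitative point — already argued in the text — is that in $\surrogate(u)$ we inspect at most $\sqrt{2m}$ neighbors before finding a mate of degree $\le\sqrt{2m}$, since the mates are distinct and otherwise the degree sum would exceed $2m$; this $O(\sqrt m)$ scan dominates, and one $\surrogate$ call cannot trigger another, so there is no loop. In \texttt{\apath}$(u)$ we spend $O(1)$ per neighbor to test $\free(w')$, with $\deg(u)\le\sqrt{2n+2m}$, and at most one $O(\sqrt n)$ call to $\gfree$. Hence handling a deletion is also $O(\sqrt{n+m})$.

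Then I would account for restoring Invariant~\ref{inv:deg}. At the end of each round we ``correct'' at most three problematic vertices: the two endpoints $u,v$ of $e_i$ and the maximum-degree free vertex extracted from $F_{max}$ in $O(\log n)$ time. Each correction is an application of Case~2.b (a $\surrogate$ call followed by an \texttt{\apath} call), which we just bounded by $O(\sqrt{n+m})$; three of these is still $O(\sqrt{n+m})$. That correcting exactly these three suffices to keep every free vertex's degree at most $\sqrt{2n+2m}$ is precisely the content of the Lemma proved just above, so I would simply invoke it rather than reprove it. Invariants~\ref{inv:new-free} and \ref{inv:match} were shown to be preserved in the case analyses of addition and deletion (and $\surrogate$ returns a vertex of degree $\le\sqrt{2m}$, so a newly freed surrogate also respects Invariant~\ref{inv:new-free}).

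The only genuinely delicate point — and the one I would flag as the main obstacle — is Invariant~\ref{inv:deg}, i.e.\ the Lemma: it is not preserved by a local argument within a single round, because the threshold $\sqrt{2n+2m}$ shrinks as edges are deleted and many vertices can become near-violating at once. The resolution, already carried out in the proof above, is the amortization-over-time argument: a problematic vertex's degree is frozen while it stays problematic, so after more than $n$ edge deletions lower the threshold past its degree it would have been handled via $F_{max}$ (which serves the highest-degree free vertex first) well before it could actually violate the bound. Since that argument is the Lemma we may cite, the proof of the Theorem itself is just the bookkeeping of the three paragraphs above, and I would conclude by noting that correctness plus the per-round $O(\sqrt{n+m})$ bound is exactly the statement.
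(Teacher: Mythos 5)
Your proposal is correct and mirrors the paper's own argument: correctness comes directly from Invariant~\ref{inv:match} together with the Hopcroft--Karp fact quoted in the Preliminaries, the $O(\sqrt{n+m})$ worst-case bound comes from the same case-by-case accounting of \texttt{handle-addition}, \texttt{handle-deletion}, and the three end-of-round corrections, and Invariant~\ref{inv:deg} is handled by invoking the preceding Lemma rather than reproving it. This is exactly how the paper assembles the theorem.
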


\section{Low Arboricity Graphs}\label{sec:arb}

In this section we consider graphs with arboricity bounded by $c$.
\begin{definition}\label{def:arb}
A graph $G=(V,E)$ has arboricity $c$ if
\[
c=\max_{U\subseteq V}\left\lceil\frac{|E(U)|}{|U|-1}\right\rceil,
\]
where $E(U)=\left\{\{u,v\}\in E\mid u,v\in U\right\}$.
\end{definition}
The family of graphs with bounded arboricity is the family of uniformly \emph{sparse} graphs. In particular, it contains planar and bounded genus graphs, bounded tree-width graphs and in general all graphs excluding fixed minors.

A $\Delta$-orientation of an undirected graph $G=(V,E)$ is a directed graph $H=(V,A)$ where $A$ contains the same edges as in $E$ (each edge is given a direction), so that the out-degree of every vertex in $H$ is at most $\Delta$.
A well known theorem of Nash and Williams \cite{NW64} asserts that a graph has arboricity at most $c$ iff $E$ can be partitioned to $E_1,\dots,E_c$ such that $(V,E_i)$ is a forest for all $1\le i \le c$. This suggests that one can select an arbitrary root for all trees in the forests, and direct all edges towards the root. The out-degree of any vertex in each forest $(V,E_i)$ is at most $1$, so $G$ has a $c$-orientation.

Consider a sequence of graphs ${\cal G} = (G_0,G_1,\dots,G_k)$ on the vertex set $V$ with $|V|=n$.  We say that ${\cal G}$ has arboricity $c$ if: $G_0$ is the empty graph, for all $1\le i \le k$, $G_i$ is obtained from $G_{i-1}$ by adding or deleting an edge, and all graphs $G_i$ have arboricity at most $c$. We say that an algorithm maintains a $\Delta$-orientation for ${\cal G}$ with amortized time $T$ if it provides a $\Delta$-orientation $H_i$ for every $G_i$, and the total number of edge re-orientations is $k\cdot T$.
Brodal and Fagerberg \cite{BF99} showed the following theorem:

\begin{theorem}[Brodal and Fagerberg \cite{BF99}]\label{thm:BF}
For any sequence of graphs $\cG$ with arboricity $c$, and any $\Delta\ge 2\delta>2c$ there is an algorithm that maintains a $\Delta$-orientation for $\cG$ with amortized time $T=O(\Delta+\frac{\Delta+1}{\Delta+1-2\delta}\cdot\log_{\delta/c}n)$.
\end{theorem}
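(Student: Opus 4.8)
The plan is to maintain the orientation \emph{explicitly}: for each vertex $v$ keep a list of its out-neighbours and an out-degree counter $d^{+}(v)$, and enforce at all times the hard invariant $d^{+}(v)\le\Delta$. Deletions are essentially free --- removing an edge only decreases out-degrees, so the invariant survives and one merely patches the two incident adjacency lists in $O(\Delta)$ time, with no re-orientation. For an insertion of $\{u,v\}$, orient the new edge out of the endpoint of currently smaller out-degree, say $u$, and increment $d^{+}(u)$; if $d^{+}(u)\le\Delta$ we stop, and otherwise $d^{+}(u)=\Delta+1$ and we run a \emph{repair}. The repair starts a breadth-first search from $u$ along \emph{out}-edges, expanding only through vertices of out-degree $\ge\delta$, and halts the instant it reaches (along an out-edge) a vertex $z$ with $d^{+}(z)<\delta$; it then reverses every edge of the tree path $u=x_{0}\to x_{1}\to\cdots\to x_{k}\to z$.

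Correctness is immediate once $z$ is known to exist: reversing the path changes only $d^{+}(u)$, which falls back to $\Delta$, and $d^{+}(z)$, which rises to at most $\delta\le\Delta$, while every internal vertex trades one out-edge for another and is unchanged. So the invariant is restored; moreover the unique vertex whose out-degree went up ($z$) is still below $\Delta$, so one insertion triggers at most one repair and there is no cascade. To see that $z$ exists, let $R$ be the set of vertices reachable from $u$ by directed paths through out-degree-$\ge\delta$ vertices, so $u\in R$. If no vertex of $R$ had an out-edge to an out-degree-$<\delta$ vertex, every out-edge of $R$ would stay inside $R$, giving $|E(R)|\ge\sum_{w\in R}d^{+}(w)\ge\delta|R|$; but $\cG$ has arboricity $\le c$ (Definition~\ref{def:arb}), so $|E(R)|\le c(|R|-1)<c|R|<\delta|R|$ (using $\delta>c$), a contradiction.

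The number of re-orientations done by one repair is the length of the path $u$--$z$. Let $S_{i}$ be the set of vertices expanded by the search at depth $\le i$; while no out-degree-$<\delta$ vertex has yet been seen, every out-edge of $S_{i}$ lands in $S_{i+1}$, and these out-edges are distinct edges (an orientation has no $2$-cycles), so $\delta|S_{i}|\le\sum_{w\in S_{i}}d^{+}(w)\le|E(S_{i+1})|\le c(|S_{i+1}|-1)<c|S_{i+1}|$, i.e.\ $|S_{i+1}|>(\delta/c)|S_{i}|$. Since $|S_{0}|=1$ and $|S_{i}|\le n$, the search must meet an out-degree-$<\delta$ vertex within depth $O(\log_{\delta/c}n)$; hence each insertion incurs $O(\log_{\delta/c}n)$ re-orientations and each deletion none, comfortably inside the claimed $T=O(\Delta+\frac{\Delta+1}{\Delta+1-2\delta}\log_{\delta/c}n)$ (this cascade-free variant is in fact stronger). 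The step I expect to be the real work is not the re-orientation count but turning it into the claimed \emph{time} bound: the search may scan all out-edges of every vertex it expands, so one must control how much of the graph a repair can explore. This is where the gap between the trigger threshold $\Delta$ and the stopping threshold $\delta$, and the hypothesis $\Delta\ge2\delta$, are used --- in Brodal and Fagerberg's original algorithm a repair pushes the offending vertex \emph{far} below $\delta$ (not merely back to $\Delta$), liberating a lot of ``credit'' --- and the accounting is carried out by a potential function on the out-degrees, roughly $\Phi=\lambda\sum_{v}\max(0,d^{+}(v)-\delta)$ for a suitable scale $\lambda$: an insertion raises $\Phi$ by $O(\lambda)$, a deletion never raises it, and the work of a (possibly cascading) repair is paid for by the credit released as vertices are pushed below $\delta$. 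Choosing $\lambda$ to dominate the per-repair cost while preventing the cheap deletions from silently draining the accumulated credit is the delicate point, and it is what produces the extra $\Delta$ and $\frac{\Delta+1}{\Delta+1-2\delta}$ factors in the stated bound.
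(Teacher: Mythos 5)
The paper does not prove this statement---Theorem~\ref{thm:BF} is imported as a black box from Brodal and Fagerberg~\cite{BF99}, with no proof given here---so there is no ``paper's own proof'' to compare against. What you have done is design a \emph{different} orientation algorithm and analyze it directly. Brodal and Fagerberg orient the new edge arbitrarily, and whenever a vertex $v$ exceeds out-degree $\Delta$ they reverse \emph{all} of $v$'s out-edges, which may cascade; their bound is obtained by a competitive/charging argument against an offline algorithm that maintains out-degree $\delta$, and the factor $\frac{\Delta+1}{\Delta+1-2\delta}$ and the hypothesis $\Delta\ge 2\delta$ are exactly the artifacts of that amortization. You instead repair by reversing a single directed path found by BFS, with no cascade.

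Your re-orientation analysis is correct and in fact stronger than the cited bound. The existence of the terminal vertex $z$ and the growth inequality $|S_{i+1}|>(\delta/c)|S_i|$ both follow cleanly from $\delta>c$ and the arboricity bound on $E(S_{i+1})$ (your argument never uses $\Delta\ge 2\delta$, which is fine---that hypothesis belongs to Brodal--Fagerberg's charging scheme, not to yours). This gives a \emph{worst-case} $O(\log_{\delta/c}n)$ edge reversals per insertion and zero per deletion. Since the paper explicitly defines ``maintains a $\Delta$-orientation with amortized time $T$'' to mean that the total number of edge re-orientations is $kT$, your analysis already establishes the statement: $O(\log_{\delta/c}n)$ is dominated by the claimed $O\bigl(\Delta+\frac{\Delta+1}{\Delta+1-2\delta}\log_{\delta/c}n\bigr)$.

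This means your closing paragraph is chasing a ghost. You flag ``turning the re-orientation count into the claimed time bound'' as the remaining hard step and sketch Brodal--Fagerberg's potential-function accounting as what would be needed. But $T$ in the theorem, as the paper defines and uses it (e.g.\ in the proof of Theorem~\ref{thm:arb}, where the downstream cost is $O(\Delta+t_i)$ with $t_i$ the number of re-orientations), counts re-orientations, not the raw scan time of a repair. Your cascade-free algorithm already satisfies the statement; the $\lambda$-credit discussion describes Brodal--Fagerberg's proof of \emph{their} algorithm, which you neither reproduce nor need. The caveat you \emph{should} have flagged is a different one: your repair may scan many more out-edges than it reverses, so although it meets the re-orientation bound of Theorem~\ref{thm:BF}, it is not a drop-in replacement wherever Brodal--Fagerberg's algorithm is used with the implicit assumption that its running time is proportional to the number of reversals (as in the space-optimal implementation of Section~5). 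Stating that you have re-derived the orientation result by a different, cascade-free route, rather than reproduced the cited theorem's proof, would make the proposal accurate.
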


\subsection{Reduction from Matchings to Orientations}

Now we explain how to use an algorithm ${\cal A}$ that maintains a $\Delta$-orientation in amortized time $T$, in order to obtain an algorithm that maintains a maximal matching in amortized time $O(\Delta+T)$. The idea behind the algorithm is the following: every vertex is responsible to notify about its state - free or matched - all the vertices it is pointing to (there are at most $\Delta$ such vertices). In other words, each vertex knows exactly who is free among the (possibly many) vertices pointing towards it, but knows nothing of the (at most $\Delta$) vertices it is pointing to. This partial information enables vertices to pay only $O(\Delta)$ time in order to retrieve all information about their neighbors, and also they can perform the necessary status updates in $O(\Delta)$ time.

Consider a sequence of graphs ${\cal G}$ with arboricity $c$. For every graph $G_i\in {\cal G}$ we have an orientation $H_i$ given by algorithm ${\cal A}$. For a vertex $u\in V$ denote by $N_i(u)$ the set of neighbors of $u$ in $G_i$, and let $D_i(u)\subseteq N_i(u)$ be the set of vertices such that the edge $(u,v)$ is directed out of $u$ in the current orientation induced by $H_i$. Observe that $|D_i(u)|\le\Delta$ for all $u\in V$ and $0\le i\le k$. We will maintain the following data structures:
\begin{itemize}
\item

A data structure $D(u)$ holding all the vertices of $D_i(u)$.

\item

A data structure $F(u)$ holding all the \emph{free} vertices $v\in N_i(u)\setminus D_i(u)$.

\item

A data structure $M$ containing all the matched edges, and a value for every vertex indicating whether it is free or matched.

\end{itemize}

Each of these data structures will be implemented using an array of size $n$ augmented with a linked list. For $D(u)$ and $F(u)$ the array will be boolean, while for $M$ entry $i$ will include the mate of node $i$ in the matching, or $\bot$ if node $i$ is free. Insertions can easily be done in $O(1)$ time, however deletions are done only in the array - the linked lists may contain extra elements that are in fact deleted. Whenever an extraction is needed, we go over the linked list starting at its head, and verify every element we encounter against the array. If an extraction took $r$ verifications until a valid element was found, then we have deleted $r-1$ elements from the list and the cost is divided among these $r-1$ delete operations. We conclude that all these data structures support insertion, deletion and extraction in amortized $O(1)$ time.

We now give an overview of the algorithm that maintains a maximal matching in $\cG$.
\begin{itemize}
\item

{\bf Orientation:}
At every step we run algorithm ${\cal A}$ that preserves a $\Delta$-orientation for the current graph. We then update the data structures $F$ and $D$ as described in Figure \ref{fig:update} so that they will be consistent with the current orientation. If there are $t$ edge re-orientations then this takes $O(t)$ time.

\item

{\bf Insertion:} (see Figure \ref{fig:insert}).
Upon edge $\{u,v\}$ insertion, we check if we can add the edge to the matching $M$, and if so remove $u$ and $v$ from the free neighbors data structure $F(w)$ for every point $w$ in $D(u)$ and $D(v)$. As all out-degrees are at most $\Delta$, this takes $O(\Delta)$ time.

\item

{\bf Deletion:} (see Figure \ref{fig:delete}).
Upon deletion of a non-matched edge $\{u,v\}$, nothing more needs to be done. The interesting case is deleting a matched edge $\{u,v\}$: here we try to find new match for $u$ (respectively $v$) by going over $F(u)$ and $D(u)$ (respectively $F(v)$ and $D(v)$). Observe that $D(u)$ may contain both free and matched vertices, so we must go over all of its elements, which takes $O(\Delta)$ time. Although $F(u)$ may be very large, we are guaranteed that all vertices in it are free, so we just need to extract one of them (or tell that $F(u)$ is empty) which takes amortized $O(1)$ time.
\end{itemize}

We conclude that the following result holds.

\begin{theorem}\label{thm:arb}
For every sequence of graphs ${\cal G}$ on $n$ vertices with arboricity $c$, and for any $\Delta>2c$, there is an algorithm that maintains a maximal matching in amortized time $T=O(\Delta+\log_{\Delta/c}n)$.
\end{theorem}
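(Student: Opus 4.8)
The plan is to instantiate the reduction sketched above: run the Brodal–Fagerberg orientation algorithm of Theorem~\ref{thm:BF} in the background, keep the three data structures $D(\cdot)$, $F(\cdot)$, $M$ consistent with the current orientation and matching, and answer each update by the insertion/deletion routines of Figures~\ref{fig:insert} and~\ref{fig:delete}. First I would dispense with the degenerate parameter range: we may assume $\Delta \ge 8c$, since if $2c<\Delta<8c$ then running the construction with the parameter $8c$ in place of $\Delta$ gives amortized time $O(c+\log n)$, which is $O(\Delta+\log_{\Delta/c}n)$ because $c=O(\Delta)$ and $\log_{\Delta/c}n=\Omega(\log n)$ in this range. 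With $\Delta\ge 8c$ in hand, I would invoke Theorem~\ref{thm:BF} with out-degree bound $\Delta$ and $\delta=\Delta/4$: this is legal since $\Delta\ge 2\delta$ and $2\delta=\Delta/2>2c$, the gap $\Delta+1-2\delta=\Theta(\Delta)$ makes the factor $\frac{\Delta+1}{\Delta+1-2\delta}$ a constant, and $\log_{\delta/c}n=\log_{\Delta/(4c)}n=O(\log_{\Delta/c}n)$ because $\Delta/c\ge 8$. Hence the orientation $H_i$ is maintained with amortized $O(\Delta+\log_{\Delta/c}n)$ edge re-orientations, and since updating $D$ and $F$ after a single re-orientation of $(u,v)$ takes $O(1)$ time (move $v$ out of $D(u)$ and $u$ into $D(v)$, and, if $u$ or $v$ is free, move it into or out of the appropriate $F$-list, exactly as in Figure~\ref{fig:update}), the total work spent tracking the orientation is amortized $O(\Delta+\log_{\Delta/c}n)$.

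Next I would verify the invariant that $M$ is always a maximal matching, by induction on the updates. For an insertion of $\{u,v\}$: if both endpoints are free we add $\{u,v\}$ to $M$ and then remove $u$ (resp.\ $v$) from $F(w)$ for each $w\in D(u)$ (resp.\ $w\in D(v)$), which is the only bookkeeping needed because a vertex reports its status only along its at most $\Delta$ out-going edges; otherwise at least one endpoint is already matched, $M$ is unchanged, and no two free vertices become adjacent. Deleting a non-matching edge only requires an $F$/$D$ cleanup. Deleting a matching edge $\{u,v\}$ is the crux: we remove it from $M$ and re-examine $u$ and $v$. To decide whether $u$ has a free neighbour I would scan \emph{all} of $D(u)$ (at most $\Delta$ vertices), checking each one's status in $M$, and additionally consult $F(u)$, which by construction holds exactly the free \emph{in}-neighbours of $u$; together these cover every neighbour of $u$, so if nothing is found, all of $u$'s neighbours are matched and leaving $u$ free keeps $M$ maximal (we then add $u$ to $F(w)$ for every $w\in D(u)$). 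If a free neighbour $w$ is found we add $\{u,w\}$ to $M$ and refresh the $F$-lists of the out-neighbours of $u$ and of $w$. Treating $v$ symmetrically afterwards cannot undo maximality: since $M$ was maximal before the update, the only vertices that can be free and adjacent after it are $u$, $v$, and (transiently) a vertex matched to one of them, and each such potential violation is inspected and resolved.

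For the running time, each update first invokes the orientation algorithm (amortized $O(\Delta+\log_{\Delta/c}n)$ as above, including the $O(1)$-per-re-orientation list maintenance), and then does the matching bookkeeping: scanning $D(u)$ and $D(v)$ costs $O(\Delta)$ in the worst case; consulting or extracting from the $F$-lists costs amortized $O(1)$ each, by the array-plus-lazy-linked-list implementation already analysed; updating $M$ costs $O(1)$; and propagating the status flips of the $O(1)$ vertices whose matched/free status changes costs $O(\Delta)$, since each such vertex informs its at most $\Delta$ out-neighbours. Summing these, the amortized cost per update is $O(\Delta+\log_{\Delta/c}n)$, which is the claimed bound.

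The step I expect to be the main obstacle is the correctness argument for edge deletion — specifically, pinning down why the partial information stored at a vertex suffices. The orientation guarantees authentic free/matched information about all of a vertex's (arbitrarily many) in-neighbours via $F(u)$, but essentially nothing about its at most $\Delta$ out-neighbours, so the proof must argue carefully that (i) scanning $D(u)$ recovers the missing out-neighbour information on demand, (ii) $F(u)$ stays accurate under \emph{both} status-change propagation (each vertex notifies only its out-neighbours) and edge re-orientations (Figure~\ref{fig:update} moves a vertex in or out of the right $F$-list precisely when an edge flips direction), and (iii) handling the two endpoints of a deleted matching edge one after the other never leaves two adjacent free vertices. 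Everything else — the data-structure amortization and the Brodal–Fagerberg parameter bookkeeping — is routine.
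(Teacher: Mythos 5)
Your proof is correct and follows essentially the same route as the paper: run Brodal--Fagerberg in the background, use the out-going/free-in-neighbour split ($D(\cdot)$, $F(\cdot)$) with lazy-linked-list amortization, and argue maximality by noting that a vertex's full neighbourhood is covered by $D(u)$ (scanned in $O(\Delta)$) together with $F(u)$. If anything, you are \emph{more} careful than the paper on one point: the paper's proof simply instantiates Theorem~\ref{thm:BF} with the fixed pair $\delta=2c$, $\Delta=5c$, which yields $O(c+\log n)$ but does not visibly cover the ``for any $\Delta>2c$'' range stated in the theorem; your case split between $2c<\Delta<8c$ (substitute $8c$ and observe $\log_{\Delta/c}n=\Theta(\log n)$ there) and $\Delta\ge 8c$ (take $\delta=\Delta/4$ so the $\frac{\Delta+1}{\Delta+1-2\delta}$ factor is $O(1)$ and $\log_{\delta/c}n=O(\log_{\Delta/c}n)$) supplies the parameter bookkeeping that the paper leaves implicit. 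The remainder of your argument (the $O(\Delta)$ status-propagation cost, the $O(1)$-amortized $F$-extraction, and the observation that after the matched edge $\{u,v\}$ is deleted $u$ and $v$ are no longer adjacent so both may safely end up free) matches the paper's intent and fills in what it dismisses as ``easily verified.''
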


\begin{proof}
By Theorem \ref{thm:BF} with parameters $\delta=2c$ and $\Delta = 5c$ we can maintain an orientation with out-degree at most $\Delta$ in amortized $T$ time. Our algorithm in addition performs at most $O(\Delta +t_i)$ operations, where $t_i$ is the number of edge re-orientations at step $i$. We conclude that the amortized time is $O(T)$. It can be easily verified that the data structures $F$ and $D$ are consistent throughout the execution of the algorithm, and that $M$ is indeed a maximal matching.
\end{proof}
\begin{corollary}
Choosing $\Delta = 6c+\frac{\log n}{\log((\log n)/c)}$ will give amortized time $T=O\left(\frac{\log n}{\log((\log n)/c)} + c\right)$, for any $c= o(\log n)$.
In particular, we will get amortized time $O(\log n/\log\log n)$ when $c\le\log^{1-\epsilon}n$ (for any constant $\epsilon>0$).
For any $c  = \Omega(\log n)$, the amortized time is $T = O(c)$.
\end{corollary}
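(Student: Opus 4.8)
The plan is to instantiate Theorem~\ref{thm:arb} with the stated choice of $\Delta$ and then to check that each of its two terms, $\Delta$ and $\log_{\Delta/c}n$, is $O\left(\frac{\log n}{\log((\log n)/c)} + c\right)$ (respectively $O(c)$ in the high-arboricity regime). Throughout, recall that Theorem~\ref{thm:arb} is valid for any $\Delta>2c$, so the only constraint to verify on the chosen $\Delta$ is $\Delta>2c$.

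First consider $c = o(\log n)$ and set $\Delta = 6c + \frac{\log n}{\log((\log n)/c)}$; since $\Delta \ge 6c > 2c$, Theorem~\ref{thm:arb} applies and gives amortized time $O\left(\Delta + \log_{\Delta/c} n\right)$. The term $\Delta$ is by definition exactly of the claimed order $O\left(c + \frac{\log n}{\log((\log n)/c)}\right)$. For the base-change term, write $L := (\log n)/c$ and lower-bound the base: $\Delta/c = 6 + \frac{L}{\log L} \ge \frac{L}{\log L}$ (dropping the $6c$ in $\Delta$), and since $c = o(\log n)$ forces $L \to \infty$, for all large $n$ we have $L/\log L \ge \sqrt{L}$, hence $\log(\Delta/c) \ge \frac{1}{2}\log L = \Omega(\log((\log n)/c))$. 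Therefore $\log_{\Delta/c} n = \frac{\log n}{\log(\Delta/c)} = O\left(\frac{\log n}{\log((\log n)/c)}\right)$, and adding the two contributions yields $T = O\left(\frac{\log n}{\log((\log n)/c)} + c\right)$, as claimed.

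The ``in particular'' clause then follows by a direct estimate: if $c \le \log^{1-\epsilon} n$ then $(\log n)/c \ge \log^{\epsilon} n$, so $\log((\log n)/c) \ge \epsilon \log\log n$, giving $\frac{\log n}{\log((\log n)/c)} = O(\log n / \log\log n)$; moreover $c \le \log^{1-\epsilon} n = o(\log n/\log\log n)$, so the whole bound collapses to $O(\log n/\log\log n)$. For the remaining regime $c = \Omega(\log n)$ the formula for $\Delta$ is not needed: apply Theorem~\ref{thm:arb} with $\Delta = 5c$ (any constant multiple exceeding $2c$ works), so that $\Delta = O(c)$ and $\log_{\Delta/c} n = \log_{5} n = O(\log n) = O(c)$, whence $T = O(c)$.

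The only genuinely delicate point is the lower bound $\log(\Delta/c) = \Omega(\log((\log n)/c))$ --- that is, verifying that passing from $\Delta$ to $\Delta/c$ and then taking a logarithm costs only a constant factor relative to $\log((\log n)/c)$. This step is exactly where the hypothesis $c = o(\log n)$ is used: it guarantees $L = (\log n)/c \to \infty$, so that $L/\log L$ eventually dominates $\sqrt{L}$ (equivalently, $\log(\Delta/c) = (1-o(1))\log L$). Everything else is routine substitution into Theorems~\ref{thm:BF} and~\ref{thm:arb}, with the $O(\cdot)$ constants absorbing the behavior at small $n$.
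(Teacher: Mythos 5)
Your proof is correct and essentially fills in details the paper leaves implicit, since the corollary is stated without a supporting argument. The decomposition is the natural one: instantiate Theorem~\ref{thm:arb} with the stated $\Delta$, verify the constraint $\Delta>2c$, and bound $\Delta$ and $\log_{\Delta/c}n$ separately. The only non-trivial step --- that $\log(\Delta/c)=\Omega(\log((\log n)/c))$ --- is handled cleanly via $L/\log L\ge\sqrt L$ once $L=(\log n)/c\to\infty$, which is exactly where the hypothesis $c=o(\log n)$ enters. The specialization to $c\le\log^{1-\epsilon}n$ and the separate treatment of the $c=\Omega(\log n)$ regime (where the stated formula for $\Delta$ would involve $\log$ of a quantity $\le 1$, so a different choice such as $\Delta=5c$ must be used) are both correct and match what the authors evidently had in mind.
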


\begin{figure}[!ht]
\fbox{
\begin{minipage}[t]{165mm}
\texttt{update-orientation}($\{u,v\}$):
\begin{enumerate}
\item Run algorithm ${\cal A}$:
\item If edge $(u,v)$ was inserted and directed from $u$ to $v$:
    \begin{enumerate}
    \item Add $v$ to $D(u)$; If $u$ is free, add $u$ to $F(v)$;
    \end{enumerate}
\item Otherwise, if edge $(u,v)$ was deleted:
    \begin{enumerate}
    \item Remove $v$ from $D(u)$; If $u$ is free, remove $u$ from $F(v)$;
    \end{enumerate}
\item For every edge $(x,y)$ that was re-oriented to $(y,x)$:
    \begin{enumerate}
    \item Delete $y$ from $D(x)$, and add $x$ to $D(y)$;
    \item If $x$ is free, delete $x$ from $F(y)$; If $y$ is free, add $y$ to $F(x)$;
    \end{enumerate}
\end{enumerate}
\end{minipage}}\caption{Updating the orientation for bounded arboricity graphs}
\label{fig:update}
\end{figure}
\begin{figure}[!ht]
\fbox{
\begin{minipage}[t]{165mm}
\texttt{insert \{u,v\}}:
\begin{enumerate}
\item Run \texttt{update-orientation}($\{u,v\}$);
\item If both $u,v$ are free:
    \begin{enumerate}
    \item $M=M\cup \{u,v\}$;
    \item For each $w\in D(u)\cup D(v)$, remove $u$ and $v$ from $F(w)$;
    \end{enumerate}
\end{enumerate}
\end{minipage}}\caption{Edge insertion for bounded arboricity graphs}
\label{fig:insert}
\end{figure}
\begin{figure}[!ht]
\fbox{
\begin{minipage}[t]{165mm}
\texttt{delete \{u,v\}}:
\begin{enumerate}
\item Run \texttt{update-orientation}($\{u,v\}$);
\item If $\{u,v\}\in M$, then $M=M\setminus\{u,v\}$, and for $w\in\{u,v\}$:
    \begin{enumerate}
    \item If $F(w)\neq\emptyset$, find some $x\in F(w)$ and $M=M\cup\{x,w\}$;
    \item Otherwise, for each $x\in D(w)$:
        \begin{enumerate}
        \item If $x$ is free:
            \begin{enumerate}
            \item set $M=M\cup\{x,w\}$;
            \item break;
            \end{enumerate}
        \end{enumerate}
    \item If $\{x,w\}$ was added to $M$, then for all $y\in D(w)\cup D(x)$, remove $w$ and $x$ from $F(y)$;
    \end{enumerate}
\end{enumerate}
\end{minipage}}\caption{Edge deletion for bounded arboricity graphs}
\label{fig:delete}
\end{figure}
\vspace{0.12in}

\section{Maximal Matching Using Optimal Space}

In this section we show that the algorithm of Section 4 for bounded arboricity graphs can be implemented using only $O(n+m)$ space. Using the fact that any graph on $m$ edges has arboricity at most $\sqrt{m}$ (see \cite{DHS91}), we conclude that the amortized update time is $O(\sqrt{m})$ even for \emph{arbitrary} graphs.\footnote{Again we ignore additive terms that depend at most logarithmically on $n$.}

We remark that using dynamic hash tables it would have been easy to obtain space $O(n+m)$, but we desire a fully deterministic algorithm. Recall that $c$ is the maximum arboricity of the graph, and let $\Delta=5c$ be the maximum allowed out-degree in the graph.

\vspace{0.08in}
\noindent {\bf Data Structures:}
A data structure $M$ containing all the matched edges (and a value for each vertex
indicating whether it is free or matched) will be maintained just as in Section 4.
In addition, we will maintain for each vertex $u$ its current neighbors $N(u)$, its outgoing neighbors $D(u)$ and its free incoming neighbors $F(u)$ as linked lists (and maintain their sizes as well). We will also maintain the degree $\deg(u)$ and a variable indicating whether $u$ is free or matched. The lists $N(u)$ and $F(u)$ will not be authentic at all times, in a sense that they may contain redundant elements.
In contrast, the $D(u)$ lists will always be authentic. In fact, the Brodal-Fagerberg algorithm \cite{BF99} that we use maintains these lists explicitly in $O(n+m)$ space, so we may assume that $D(u)$ is always up to date. Also, $\deg(u)$ will contain the current degree of $u$, and thus it may be smaller than $|N(u)|$ at some stages throughout the execution of the algorithm.
In order to control the total space used, we will guarantee that $|N(u)|$ (respectively, $|F(u)|$) never exceeds $\deg(u)$ by more than a factor of $2$ (resp., $3$). We will also use a "smart"
boolean array of size $n$ (an array that allows to reset all the elements to $0$ in $O(1)$ time, see, e.g., \cite{AHU74}) for authentication of the lists. To meet the desired space requirement, the
same smart array will be re-used for all lists.

\vspace{0.08in}
\noindent {\bf Handling the $N(u)$ lists:}
First note that these lists are never used by the algorithm, their sole purpose is to assist in the authentication of the $F(u)$ lists.
Whenever an edge $\{u,v\}$ is added to the graph, we simply add $u$ to $N(v)$ and $v$ to $N(u)$ in $O(1)$ time. However, upon deletion of the edge $\{u,v\}$, we put $10\Delta$ tokens on the edge to be used later for authenticating $N(u),F(u)$ and $N(v),F(v)$. As there is at most one edge deletion per round, we can afford
to spend so many tokens. For every change to $N(u)$, we check that $|N(u)| < 2\deg(u)$. If it is not, we authenticate it in the following manner. Iterate over the list $N(u)$, and for every element $w \in N(u)$:
\begin{itemize}
\item Search $D(u)$ for $w$ and search $D(w)$ for $u$, if none of them was found, remove $w$ from $N(u)$.
\end{itemize}
As $|D(u)|,|D(w)| \le \Delta$, the cost of this authentication is at most $2\Delta\cdot |N(u)|=4\Delta\deg(u)$.
Observe that at least $\deg(u)$ elements are removed from $N(u)$, each due to a deleted edge. These deleted edges can contribute $4\Delta$ tokens each for this authentication of $N(u)$, which suffices to cover the cost (another $4\Delta$ tokens will be used for $N(v)$, where $\{u,v\}$ was the deleted edge, and $\Delta$ tokens each for $F(u),F(v)$ as described below).

\vspace{0.08in}
\noindent {\bf Handling the $F(u)$ lists:}
For $F(u)$, we will keep adding elements in $O(1)$ time according to the algorithm, but similarly to $N(u)$, deletions are postponed. We will place $O(1)$ tokens for each delete operation on some list $F(u)$ that is postponed. Spending these $O(1)$ extra tokens for each operation should increase the total amortized time
by at most a constant factor. Similarly to $N(u)$, $F(u)$ may contain non-authentic elements. Note that unlike $N(u)$, we have only $O(1)$ tokens per deleted element, thus intuitively, we must authenticate $F(u)$ using only $O(1)$ time per element. Next we show how to make sure that the size of $F(u)$ is never more than $3\deg(u)$, and how to extract an authentic element from $F(u)$.

First we show how to extract an authentic element: reset the smart array, and update it to contain the (authentic) elements of $D(u)$ in $O(c)$ time. Then we will traverse the $F(u)$ list and upon encountering element $w$ do the following:
\begin{itemize}
\item If $w$ is matched, remove $w$ from $F(u)$ (using the $O(1)$ tokens created by the deletion of $w$ from $F(u)$).
\item Otherwise, if $w\in D(u)$, remove $w$ from $F(u)$. (Checking takes $O(1)$ time using the array, so we can pay for it with the $O(1)$ tokens).
\item Otherwise, search for $u$ in $D(w)$:
\begin{itemize}
\item If $u$ is found, then $w$ is an authentic free incoming neighbor of $u$. We spent $\Delta$ time, but can terminate the search.
\item If $u$ was not found, then the edge $\{u,w\}$ must have been deleted. Thus we remove $w$ from $F(u)$.
We spent $\Delta$ time, which can be payed for by $\Delta$ of the tokens placed on the deleted edge $\{u,w\}$ exactly for this purpose.
\end{itemize}
\end{itemize}
Each of the non-authentic elements we encountered had enough tokens for executing its removal, thus the actual cost of the search is only $\Delta$. Observe that the algorithm requires at most two extractions of authentic elements at every round (when edge $\{u,v\}$ is deleted, extractions are required from $F(u)$ and $F(v)$), and thus we can afford to spend $\Delta = O(c)$ time for each.

Finally, we show how to control the size of $F(u)$.
Whenever an element is added or removed from $F(u)$ we check if $|F(u)|\ge 3\deg(u)$, and if so start a (partial) authentication process that will use tokens in order to reduce its size down to at most $2\deg(u)$.
We reset the smart array, and initialize it with $N(u)$. As $|N(u)|< 2\deg(u)$, this will take at most $O(\deg(u))$ time. Recall that it could be that $N(u)$ is not authentic. Now iterate over $F(u)$, and for each element $w \in F(u)$:
\begin{itemize}
\item If $w$ is matched, remove $w$ from $F(u)$.
\item Otherwise, if $w\notin N(u)$, remove $w$ from $F(u)$. (This takes $O(1)$ time using the array).
\end{itemize}
This (partial) authentication process may make one-sided mistakes: elements that should have been deleted from $F(u)$ may still remain, but no element will be deleted from $F(u)$ un-necessarily, since $N(u)$ contains all the authentic neighbors of $u$. Observe that since $|F(u)|\ge 3\deg(u)$ but $|N(u)|\le 2\deg(u)$, at least $\deg(u)$ elements must have been removed from $F(u)$ in the process (because if $w\notin N(u)$ it will be removed from $F(u)$). Each removed element had $O(1)$ tokens for its removal, so we had the $O(\deg(u))$ tokens to pay for this authentication process.
\vspace{0.08in}

To conclude, we have shown that the algorithm from Section \ref{sec:arb} can be implemented in optimal space $O(n+m)$, without increasing the amortized update time by more than a constant factor. To obtain maximal matching in amortized time $O(\sqrt{m})$ and within space $O(n+m)$, we do the following.
In order to use the orientation algorithm, at the beginning of every stage we set a bound on the maximum arboricity to be $c=2\sqrt{m}$. Whenever $m$ changes by a factor of $2$ (which happens in the worst case every $\sqrt{m}/2$ rounds), we end the current stage, reset the value of $c$, and recompute the orientation (this can be done in $O(m)$ time--which increases the amortized time only by a constant, and within $O(n+m)$ space). Recall that $\Delta=5c\le 20\sqrt{m}$, so the amortized time per update is indeed $O(\sqrt{m})$. (The formal proof is deferred to the full version.)
Thus we have the following theorem.
\begin{theorem}
Starting with the empty graph on $n$ vertices, a maximal matching can be maintained in amortized time $O(\sqrt{m})$ using space $O(n+m)$, where $m$ is the (current) number of edges.
\end{theorem}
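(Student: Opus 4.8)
The plan is to instantiate the bounded-arboricity algorithm of Section~\ref{sec:arb} (Theorem~\ref{thm:arb}) with a \emph{time-varying} arboricity bound and to replace the $\Theta(n)$-per-vertex arrays used there by lazily-maintained linked lists whose total length stays $O(n+m)$. Since every graph on $m$ edges has arboricity at most $\sqrt{m}$ (see \cite{DHS91}), setting $c=2\sqrt{m}$ and $\Delta=5c$ makes Theorem~\ref{thm:arb} applicable and yields amortized update time $O(\Delta+\log_{\Delta/c}n)=O(\sqrt{m})$. To keep $c$ up to date I would run the algorithm in \emph{stages}: at the start of a stage fix $c=2\sqrt{m}$, run the Brodal--Fagerberg orientation of Theorem~\ref{thm:BF} and our matching algorithm with $\Delta=5c$, and end the stage as soon as $m$ has changed by a factor of $2$. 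The latter takes at least $\Omega(\sqrt{m})$ rounds, so recomputing the orientation and all data structures from scratch in $O(m)$ time at a stage boundary costs only $O(\sqrt{m})$ amortized; and within a stage the arboricity really is at most $c$, so Theorems~\ref{thm:BF} and \ref{thm:arb} apply verbatim.

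The space argument is where the work lies, since a naive implementation keeps an $\Theta(n)$ boolean array per vertex. The matching structure $M$ and the out-neighbor lists $D(u)$ are already maintained explicitly in total size $O(n+m)$ (Brodal--Fagerberg store $D(u)$ this way), so it suffices to control $N(u)$ and the free-in-neighbor lists $F(u)$, which I keep as linked lists with \emph{lazy deletion}: insertions are $O(1)$, a deletion only marks the element in a single reused ``smart'' boolean array of size $n$ (one that supports $O(1)$ global reset, \cite{AHU74}), and stale entries are purged in occasional authentication sweeps. The accounting is a token argument. Each edge deletion deposits $\Theta(\Delta)$ tokens --- affordable because there is at most one deletion per round and $\Delta=O(\sqrt{m})$ --- and each postponed deletion from some $F(u)$ deposits $O(1)$ tokens. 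Whenever $|N(u)|$ reaches $2\deg(u)$ we sweep $N(u)$, checking each $w$ against $D(u)$ and $D(w)$ in $O(\Delta)$ time and discarding $w$ when the edge $\{u,w\}$ is gone; at least $\deg(u)$ entries vanish, each paying $\Theta(\Delta)$ from the edge that deleted it, so the sweep is covered and afterwards $|N(u)|<2\deg(u)$. Since $\sum_u \deg(u)=2m$, the $N(u)$ lists have total size $O(n+m)$ at all times.

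For $F(u)$ I would similarly trigger a \emph{partial} authentication whenever $|F(u)|$ reaches $3\deg(u)$: reset the smart array, load the (possibly stale) list $N(u)$ into it in $O(\deg(u))$ time, then walk $F(u)$ removing every matched vertex and every vertex not in $N(u)$. At least $\deg(u)$ entries are removed (each paid for by its $O(1)$ tokens), bringing $|F(u)|$ below $2\deg(u)$; this sweep may leave some stale free entries but never deletes a genuine neighbor, so $F$ remains a correct superset of the true free-in-neighbor sets. Extraction of a \emph{genuinely} free in-neighbor, needed when a matched edge at $u$ is deleted, is handled separately: reset the smart array, load $D(u)$ into it in $O(c)$ time, walk $F(u)$ dropping any matched $w$ and any $w\in D(u)$ (charged to their $O(1)$ tokens), and for the first surviving $w$ search $D(w)$ for $u$ in $O(\Delta)$ time --- if found, $w$ is a real free in-neighbor and we stop; if not, the edge $\{u,w\}$ was deleted and we charge the $\Theta(\Delta)$ cost to the tokens that deletion placed on $\{u,w\}$, then continue. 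Since the algorithm performs at most two such extractions per round, this adds only $O(\Delta)=O(\sqrt{m})$ amortized. Combining the stage bookkeeping, the lazy-list space bound, and these amortized costs gives the claimed $O(\sqrt{m})$ amortized update time in $O(n+m)$ space.

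The main obstacle I expect is the $F(u)$ accounting: unlike $N(u)$, a postponed deletion from $F(u)$ carries only $O(1)$ tokens, so its size-control sweep must run in $O(1)$ amortized per element, which forces us to authenticate $F(u)$ against the cheap but stale $N(u)$ rather than against the expensive $D(w)$ lists, accepting one-sided errors. The rarer ``extract a truly free neighbor'' operation then genuinely needs the $O(\Delta)$-per-element check against $D(w)$, and must be paid for out of the \emph{other} token pool, the $\Theta(\Delta)$ tokens sitting on deleted edges. Showing that these two pools never have to pay for the same work, and that depositing $\Theta(\Delta)=O(\sqrt{m})$ tokens per round is affordable given at most one deletion per round, is exactly the delicate part; everything else is bookkeeping.
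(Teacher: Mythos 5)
Your proposal matches the paper's Section~5 proof essentially step for step: the same staging with $c=2\sqrt{m}$ and $\Delta=5c$ reset when $m$ doubles or halves, the same lazy linked lists with a reusable $O(1)$-reset boolean array, the same two-tier token scheme ($\Theta(\Delta)$ tokens per deleted edge versus $O(1)$ tokens per postponed $F(u)$ removal), the same $N(u)$ sweep against $D(u)$ and $D(w)$ triggered at $2\deg(u)$, the same one-sided partial authentication of $F(u)$ against $N(u)$ triggered at $3\deg(u)$, and the same separate $D$-based extraction of a genuinely free in-neighbor. Nothing substantively different; the argument is correct and is the paper's own.
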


\vspace{0.01in} {\bf Acknowledgements.~}
We thank Tsvi Kopelowitz and Itay Gonshorovitz for helpful discussions.


\end {document}